\newtheorem{remark}[theorem]{Remark}
\newtheorem{assumption}[theorem]{Assumption}
\title{On the stability test for\\ reproducing kernel Hilbert spaces} 
 \author{%
    Mauro Bisiacco\thanks{Department of Information Engineering, University of Padova, Padova, Italy (bisiacco@dei.unipd.it)}
    \and Gianluigi Pillonetto\thanks{Department of Information Engineering, University of Padova, Padova, Italy (giapi@dei.unipd.it).
This work has been partially supported by the Italian SIR project RBSI14JYM2Learn4AP and by the PRIN project 2015 2015PJ28EP.
}
    \hfill\today}
\begin{document}

\maketitle

\begin{abstract}
Reproducing kernel Hilbert spaces (RKHSs)
are special Hilbert spaces where all the evaluation functionals are linear and bounded.
They are in one-to-one correspondence with positive definite maps called kernels.
Stable RKHSs enjoy the additional property of containing only functions 
defined over the real line
and absolutely integrable. Necessary and sufficient conditions for RKHS stability 
are known in the literature: the integral operator induced by the kernel 
must be bounded as map 
between $\mathcal{L}_{\infty}$, the space of essentially bounded
(test) functions, and $\mathcal{L}_1$, the space of absolutely integrable functions.
Considering Mercer (continuous) kernels in continuous-time and the entire 
discrete-time class, we show that the stability test 
can be reduced to the study of the kernel operator over test functions which assume  (almost everywhere)
only the values $\pm 1$. 
They represent the same functions needed to investigate stability of 
any single element in the RKHS. 
In this way, the RKHS stability test becomes
an elegant generalization of a straightforward result
concerning Bounded-Input Bounded-Output (BIBO) stability of 
a single linear time-invariant system.
\end{abstract}

{\bf{Keywords}}: \small 
Linear time-invariant dynamic systems; BIBO stability; kernel-based regularization; reproducing kernel Hilbert spaces 

\section{Introduction}
In control literature, the term system identification is used to 
indicate procedures that learn models of dynamic systems from input-output data \cite{Zadeh1956,Astrom71}.
When the system under study is linear and time-invariant,
the problem reduces to estimating particular functions known as system impulse responses.
Given any input, they permit to calculate the corresponding system output via a convolution.
For physical (causal) continous-time systems, impulse responses are real-valued functions defined
over the positive real axis $\mathbb{R}_+$.\\  
Over the years, many approaches for system identification have been suggested. 
One of the most popular relies on traditional statistics and can be called 
the \emph{classical system identification framework} as described in  \cite{Ljung:99}
and \cite{Soderstrom}. It uses finite-dimensional parameter structures and apply techniques, like maximum likelihood or prediction error methods (PEM), to estimate the parameters. The concept of discrete model order is then adopted to control their complexity.\\
\indent In recent years, the regularization techniques described in \cite{SpringerRegBook2022,SurveyKBsysid} have proved to be  
a powerful alternative to these classical procedures based on PEM.
Instead of postulating parametric structures, impulse responses are directly searched for in 
high-dimensional (possibly infinite-dimensional) spaces.
Ill-posedness is circumvented by including information on the physics
of the problem. Important regularizers proposed in the last years 
include notions of stability \cite{SS2010,SS2011,COL12a,Bottegal2017}.
In particular, Bounded-Input Bounded-Output (BIBO) stability is a fundamental concept
encountered in control theory. It ensures that a dynamic system, solicited by any bounded input, returns only
bounded outputs. The related necessary and sufficient condition is especially simple:
a linear and time-invariant system is BIBO stable if and only if the corresponding impulse response is absolutely integrable \cite{Kailath79}. 
For illustration purposes, it is now useful to reformulate this condition. 
Let $\mathcal{L}_{\infty}$ indicate the space of essentially bounded
functions of norm $\|\cdot\|_{\infty}$ while $\mathcal{L}_1$ is the space of absolutely integrable functions of norm $\| \cdot \|_1$.
Define also $\mathcal{U}_{\infty}$ as the subset of $\mathcal{L}_{\infty}$ containing test functions $u$ as follows:
\begin{equation}\label{defU} 
\mathcal{U}_{\infty} := \left\{ u \in {\mathcal L}_{\infty} \ \text{with} \ u:\mathbb{R}_+ \rightarrow \mathbb{R} \ \text{and} \ |u(t)|=1 \ \mbox{a.e.} \right\}.
\end{equation} 
Then, using $f$ to denote a system impulse response, it is immediate to prove that 
\begin{equation}\label{BIBOf}
\text{BIBO stability} \  \iff \ \sup_{u \in \mathcal{U}_{\infty}}  \int_0^{+\infty} f(t)u(t) dt =\| f\|_1< +\infty,
\end{equation}
where the equality on the r.h.s. holds since the supremum is obtained setting $u(t)$ to the sign of $f(t)$ almost everywhere.
In the regularized framework for system identification,
one key tool to include BIBO stability in the estimation process is
the use of particular spaces of functions, known as BIBO stable (or just stable) Reproducing kernel Hilbert spaces (RKHSs)  \cite{SurveyKBsysid,MathFoundStable2020,PNAS:SS2023}. Any function contained in such spaces enjoys the condition \eqref{BIBOf}.\\
\indent RKHSs theory is described in the fundamental works 
\cite{Aronszajn50,Bergman50}. Their simplest 
formulation says that they are Hilbert spaces 
where any pointwise evaluation of a function is a linear and bounded functional.
They are also in one-to-one correspondence with positive definite kernels.
In our setting, a kernel $K$ is a real-valued map over $\mathbb{R}_+ \times \mathbb{R}_+$
such that, for any $n$ and $n$-uple of nonnegative scalars $\{x_1,\ldots,x_n\}$, the $n \times n$ matrix $\mathbf{K}$,
with $K(x_i,x_j)$ as $(i,j)$ entry, is positive semidefinite.
While first applications of RKHSs regarding statistics, approximation theory and computer vision can be found in the eighties 
\cite{BerteroIEEE,Poggio90,Wahba:90}, such spaces were then introduced in machine learning by Federico Girosi  in \cite{Girosi:1998}.
Combining RKHSs and Tikhonov regularization theory \cite{Tikhonov1963,TichonovA:77} leads to   
powerful algorithms for function estimation like regularization networks/kernel-ridge regression and support vector machines
\cite{Vapnik98,Scholkopf01b,Suykens2002,Bell2004}.\\
\indent In system identification, the concept of stable RKHSs for impulse response estimation 
was introduced in
\cite{SS2010}, see also e.g.  
\cite{SS2010,COL12a,Dinuzzo12,SurveyKBsysid} for further developments. 
As already mentioned above, these spaces, induced by the so called stable kernels, are RKHSs containing
only absolutely integrable functions. 
They permit to define impulse response estimators looking for solutions that balance adherence to experimental data and a penalty term accounting for BIBO stability. This makes the search space manageable by inducing a ranking of possible solutions: among dynamic systems that describe the data in a similar way, the one that is, in some sense, more stable will be selected.\\
\indent A complete characterization
of stable RKHSs can be found in \cite{Carmeli,MathFoundStable2020}.
It makes use of the kernel operator $L_K$ 
which,
for any given kernel $K$, maps functions $u$ in $y_u$ where
$$
y_u=L_K[u] \ \iff \ y_u(t)=\int_0^{+\infty} \ K(t,\tau)u(\tau)d\tau, \ t\ge 0.
$$
Letting $\mathcal{H}$ be the RKHS induced by $K$, one then has 
\begin{equation}\label{BIBOH}
\text{BIBO stable $\mathcal{H}$} \ \iff \ y_u \in \mathcal{L}_1 \ \ \forall u \in \mathcal{L}_{\infty}. 
\end{equation}
A simple application of the closed graph theorem, along the same line e.g. of that described in 
Lemma 4.1 contained in \cite{CP18},
shows that such condition is equivalent to the continuity
of the kernel operator as a map from  $\mathcal{L}_1$ to $\mathcal{L}_{\infty}$. 
In terms of its operator norm $\|K\|_{\infty,1}$, this means that one must have $\|K\|_{\infty,1}<+\infty$. 
In light of this observation, an equivalent stability condition can be now introduced.
We use $\|\cdot\|_{\infty}$ to indicate the norm in $\mathcal{L}_{\infty}$ and denote the 
boundary of the unit ball in such space as
\begin{equation}\label{defD}
\mathcal{B}_{\infty} =\Big\{ \ u \in \mathcal{L}_{\infty}: \|u\|_{\infty} = 1 \Big\}. 
\end{equation}
Then, the following reformulation of \eqref{BIBOH} holds:
\begin{equation}\label{BIBOH2}
\text{BIBO stable $\mathcal{H}$} \ \iff \ \sup_{u \in \mathcal{B}_{\infty}} \ \big\| \int_0^{+\infty} K(\cdot,x)u(x)dx \big\|_{\mathcal{L}_1} =\|K\|_{\infty,1} < +\infty. 
\end{equation}
This equivalence is, in some sense, closer to the condition \eqref{BIBOf}. The requirement on the single system,
represented by the impulse response
$f$, has become a condition on the kernel $K$ which embeds an infinite set of impulse responses 
(all those belonging to the induced RKHS $\mathcal{H}$). However, something that \emph{disturbs} the symmetry
is the fact that the supremum is taken over  $\mathcal{B}_{\infty}$, a larger set which strictly contains 
$\mathcal{U}_{\infty}$. To fix this issue, we work in continuous-time and consider continuous kernels $K$.
More formally, the following assumption is adopted.\\

\begin{assumption}\label{MainA}
The kernel $K:\mathbb{R}_+ \times \mathbb{R}_+ \rightarrow \mathbb{R}$
is continuous. In addition, let $\mathcal{I}$ be the set containing all the $t$
such that the kernel sections $K(t,\cdot)$ are not absolutely integrable.
Then, the Lebesgue measure of  $\mathcal{I}$ is null.\\
\end{assumption}

The first point in the assumption says that we consider the fundamental
Mercer class which contains in practice all the kernels
adopted in machine learning and system identification \cite{Scholkopf01b,SpringerRegBook2022}. 
The second part regarding the Lebesgue measure of $\mathcal{I}$
guarantees that the kernel operator makes sense.
Then,  
we will show that also in the RKHS setting 
the test functions can be confined to $\mathcal{U}_{\infty}$. 
Beyond its theoretical interest, one additional advantage of this outcome is quite obvious: 
any procedure becomes simpler by restricting its domain of competency. 
An example can be found in \cite{AbsSum2020}, where this result was first proved to hold for 
block-diagonal (discrete-time) kernels. Then, it was used for studying
this class and obtaining new insights on stable kernels theory.  
Hence, it sounds likely that our result, reported formally in the following theorem,
can be important in the future to face similar problems involving explicit computation of the kernel operator norm.\\


\begin{theorem}\label{MainTh}
Let $\mathcal{H}$ be the RKHS induced by the continuous (Mercer) kernel $K: \mathbb{R}_+ \times \mathbb{R}_+ \rightarrow \mathbb{R}$.
Then, if Assumption \ref{MainA} holds, one has 
\begin{equation}\label{BIBOH3}
\text{BIBO stable $\mathcal{H}$} \ \iff \ \sup_{u \in \mathcal{U}_{\infty}} \ \big\| \int_0^{+\infty} K(\cdot,x)u(x)dx \big\|_{\mathcal{L}_1}=\|K\|_{\infty,1} < +\infty
\end{equation}
with $\mathcal{U}_{\infty}$ defined in \eqref{defU}. \\
\end{theorem}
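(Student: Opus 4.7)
The plan is to leverage the already-stated identity \eqref{BIBOH2}: it suffices to prove the unconditional equality $\sup_{u\in\mathcal{B}_\infty}\|L_K u\|_{\mathcal{L}_1}=\sup_{v\in\mathcal{U}_\infty}\|L_K v\|_{\mathcal{L}_1}$ (both sides allowed to be $+\infty$). The inclusion $\mathcal{U}_\infty\subset\mathcal{B}_\infty$ yields one direction for free, so the real task is, for an arbitrary $u\in\mathcal{B}_\infty$, to exhibit test functions in $\mathcal{U}_\infty$ whose image under $L_K$ matches the $\mathcal{L}_1$-mass of $L_K u$.

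The natural move is to set $\sigma(t):=\mathrm{sign}(L_K u(t))$, well defined a.e.\ by Assumption~\ref{MainA}, so that $\|L_K u\|_{\mathcal{L}_1}=\int\sigma(t)\int K(t,x)u(x)\,dx\,dt$. Swapping the order of integration would then express the norm as $\int u(x)g(x)\,dx$ with $g(x):=\int\sigma(t)K(t,x)\,dt$, after which replacing $u$ by $\mathrm{sign}(g)\in\mathcal{U}_\infty$ and reversing the swap would close the loop. The main obstacle is precisely this double swap: a stable kernel need not lie in $\mathcal{L}_1(\mathbb{R}_+\times\mathbb{R}_+)$ (the operator norm $\|K\|_{\infty,1}$ may be strictly smaller than $\iint|K|$), so Fubini is not directly available.

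I would bypass this by a truncation designed to control both integrals. Assumption~\ref{MainA} together with the continuity of $K$ makes $h(t):=\|K(t,\cdot)\|_{\mathcal{L}_1}$ lower semicontinuous (it is a monotone limit of continuous functions) and finite a.e., so the sets
\begin{equation*}
E_n:=\bigl\{t\in[0,n]:h(t)\le n\bigr\}
\end{equation*}
are measurable and increase to a subset of $\mathbb{R}_+$ of full Lebesgue measure. With $\sigma_n:=\sigma\,\mathbf{1}_{E_n}$, the bound $\int|\sigma_n(t)|\,h(t)\,dt\le n^2$ legitimizes Fubini and yields $g_n(x):=\int\sigma_n(t)K(t,x)\,dt\in\mathcal{L}_1$ together with
\begin{equation*}
\int_{E_n}|L_K u(t)|\,dt=\int u(x)g_n(x)\,dx\le \|g_n\|_{\mathcal{L}_1}.
\end{equation*}
Choosing $v_n\in\mathcal{U}_\infty$ equal to $\mathrm{sign}(g_n)$ where $g_n\neq 0$ and $+1$ elsewhere gives $\|g_n\|_{\mathcal{L}_1}=\int v_n g_n$, and a second Fubini (the same $n^2$ bound still applies since $|v_n|\equiv 1$) rewrites this as $\int\sigma_n(t)L_K v_n(t)\,dt\le\|L_K v_n\|_{\mathcal{L}_1}\le \sup_{v\in\mathcal{U}_\infty}\|L_K v\|_{\mathcal{L}_1}$.

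Chaining the inequalities gives $\int_{E_n}|L_K u|\,dt\le \sup_{v\in\mathcal{U}_\infty}\|L_K v\|_{\mathcal{L}_1}$ for every $n$; letting $n\to\infty$, monotone convergence on the exhaustion $E_n\nearrow \mathbb{R}_+\setminus\mathcal{I}$ transfers the bound to $\|L_K u\|_{\mathcal{L}_1}$, and then taking the supremum over $u\in\mathcal{B}_\infty$ yields the desired equality. Because the argument remains valid when $\|L_K u\|_{\mathcal{L}_1}=+\infty$, the BIBO-stability biconditional in the theorem drops out as an immediate corollary of \eqref{BIBOH2}. The delicate point, as already foreshadowed, is the design of the truncation: intersecting $\{t\le n\}$ with $\{h(t)\le n\}$ has to simultaneously make the double integral absolutely convergent and leave no $\mathcal{L}_1$-mass of $L_K u$ behind in the limit.
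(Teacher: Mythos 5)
Your argument is correct, but it is a genuinely different proof from the one in the paper. The paper splits into two cases: for stable $K$ it runs a convexity/Bauer-type argument (Lemmas \ref{Lemma1}--\ref{Lemma3}) that repeatedly perturbs a given $u\in\mathcal{B}_\infty$ along indicator directions, pushing $|u|$ up to $1$ in dyadic stages without decreasing $\|L_K[u]\|_1$ and then passing to $\mathrm{sign}(v_n)$ using continuity of $L_K$; for unstable $K$ it truncates the kernel to squares $[0,n]\times[0,n]$, shows $\|K_n\|_{\infty,1}\to\infty$ via Fatou's lemma, and extends the resulting test functions by $\pm1$ using a degenerate convexity lemma. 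You instead exploit the bilinear duality $\|L_Ku\|_{1}=\sup_{\sigma}\int\sigma\,(L_Ku)$ and a double application of Fubini to pass the optimization from $u$ to the adjoint side and back, which yields $\|L_Ku\|_1\le\sup_{v\in\mathcal{U}_\infty}\|L_Kv\|_1$ for every $u\in\mathcal{B}_\infty$ in one stroke, with no stable/unstable dichotomy. The delicate point you correctly identified and resolved is the Fubini justification: since a stable kernel need not satisfy $\iint|K|<\infty$, you truncate in $t$ over $E_n=\{t\le n:\|K(t,\cdot)\|_1\le n\}$, which makes both iterated integrals absolutely convergent (bound $n^2$), and Assumption \ref{MainA} guarantees $E_n$ exhausts $\mathbb{R}_+$ up to a null set so that monotone convergence recovers the full norm, including the case $\|L_Ku\|_1=+\infty$. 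What each approach buys: yours is shorter, avoids the case split, Fatou, and the operator-continuity step (so it does not presuppose stability anywhere), and makes transparent that the result is really the statement that the bilinear form $\int\!\!\int \sigma(t)K(t,x)u(x)$ is maximized at $\pm1$-valued functions on both arguments; the paper's convexity route avoids Fubini entirely, is more constructive about how a near-optimal $u\in\mathcal{B}_\infty$ is deformed into $\mathcal{U}_\infty$, and transfers verbatim to the discrete-time setting of Section \ref{Sec3}. Minor points to make explicit if you write this up: measurability of $h(t)=\|K(t,\cdot)\|_1$ and of $\sigma=\mathrm{sign}(L_Ku)$ (both follow from joint measurability of $K$ and Tonelli, as you sketch), and the convention for $\mathrm{sign}$ on the zero set so that $v_n\in\mathcal{U}_\infty$, which you already handle.
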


The proof is reported below, in Section \ref{Sec2}.
It is worth also remarking that such result holds 
also considering the entire class of discrete-time kernels 
defined over $\mathbb{N} \times \mathbb{N}$. This is discussed in Section 
\ref{Sec3}.

\section{Proof of Theorem \ref{MainTh}}\label{Sec2}

As detailed below, the proof of Theorem \ref{MainTh} is not trivial also because the 
well known Bauer's theorem cannot be used in our context.
Such principle states that the maximum of a convex function
over a closed convex set is obtained at the boundary of the optimization domain,
 e.g. see \cite{RTR}[Part 4] or \cite{Kruzik2000}.
But in our case $\mathcal{U}_{\infty}$ is a proper subset of the boundary $\mathcal{B}_{\infty}$
of the unit ball.\\ 
It is natural to divide the proof in two parts,
assuming that the kernel $K$ is either stable or unstable.

\subsection{The case of stable $K$}

Let $K$ be a Mercer and stable kernel over $\mathbb{R}_+ \times \mathbb{R}_+$.
It comes that $L_K:{\mathcal L}_{\infty} \rightarrow {\mathcal L}_1$ is well defined  
with finite operator norm, i.e. 
 $$
\|K\|_{\infty,1}=\sup_{u \in \mathcal{B}_{\infty}} \ \|L_K[u]\|_1 < +\infty.
$$
Now, we want to show that 
\begin{equation}\label{StableCase}
\sup_{u\in{\mathcal B}_{\infty}} \ \|L_K[u]\|_1=\sup_{u\in{\mathcal U}_{\infty}} \ \|L_K[u]\|_1.
\end{equation}
Since ${\mathcal U}_{\infty}\subset {\mathcal B}_{\infty}$, the problem reduces to proving 
that the following statement holds true.
\begin{proposition}\label{Statement1}
 It holds that
$$
\sup_{u \in \mathcal{U}_{\infty}} \ \|L_K[u]\|_1\ge\sup_{u \in \mathcal{B}_{\infty}} \ \|L_K[u]\|_1.
$$
\end{proposition}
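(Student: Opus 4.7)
My plan is to approximate any $u\in\mathcal{B}_{\infty}$ by a sequence $\{u_n\}\subset\mathcal{U}_{\infty}$ converging weak-$\ast$ in $\mathcal{L}_{\infty}=(\mathcal{L}_{1})^{\ast}$, and then pass to the limit in $\|L_K[\cdot]\|_1$ via Fatou's lemma. Positive homogeneity of $F(u):=\|L_K[u]\|_1$ lets me work on the closed unit ball $B=\{u\in\mathcal{L}_{\infty}:\|u\|_{\infty}\le 1\}$, which is weak-$\ast$ compact by Banach--Alaoglu and weak-$\ast$ metrizable since $\mathcal{L}_{1}(\mathbb{R}_+)$ is separable; hence weak-$\ast$ density of $\mathcal{U}_{\infty}$ in $B$ automatically upgrades to \emph{sequential} density.

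The crucial lemma is that $\mathcal{U}_{\infty}$ is weak-$\ast$ dense in $B$, which I would establish constructively rather than invoking Krein--Milman (which would only identify $B$ as the weak-$\ast$ closed convex hull of $\mathcal{U}_{\infty}$). Given $u\in B$, partition $\mathbb{R}_+$ into intervals $I_j^{(n)}=[(j-1)/n,\,j/n]$ and set $c_j^{(n)}:=n\int_{I_j^{(n)}} u$, so $|c_j^{(n)}|\le 1$. Split each $I_j^{(n)}$ into two sub-intervals of proportions $(1+c_j^{(n)})/2$ and $(1-c_j^{(n)})/2$, and define $u_n$ to be $+1$ on the first and $-1$ on the second. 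By construction $\int_{I_j^{(n)}} u_n=\int_{I_j^{(n)}} u$, and a standard Riemann-sum estimate using the modulus of continuity of $f$ gives $\int u_n f\to\int u f$ for every $f\in C_c(\mathbb{R}_+)$; density of $C_c(\mathbb{R}_+)$ in $\mathcal{L}_{1}$ extends this to every $f\in\mathcal{L}_{1}$, so $u_n\to u$ weak-$\ast$ and $u_n\in\mathcal{U}_{\infty}$.

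With $\{u_n\}$ in hand, Assumption~\ref{MainA} closes the argument. For a.e.\ $t$ one has $K(t,\cdot)\in\mathcal{L}_{1}$, so by the definition of weak-$\ast$ convergence
\[
L_K[u_n](t)=\int_0^{+\infty} K(t,\tau)u_n(\tau)\,d\tau\;\longrightarrow\;\int_0^{+\infty} K(t,\tau)u(\tau)\,d\tau=L_K[u](t),
\]
i.e.\ $|L_K[u_n]|\to |L_K[u]|$ pointwise a.e. Fatou's lemma then yields
\[
\|L_K[u]\|_1\le\liminf_{n\to\infty}\|L_K[u_n]\|_1\le\sup_{\tilde u\in\mathcal{U}_{\infty}}\|L_K[\tilde u]\|_1,
\]
and taking the supremum over $u\in\mathcal{B}_{\infty}$ proves the proposition.

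The step I expect to be the main obstacle is the explicit weak-$\ast$ density lemma: one really needs $\mathcal{U}_{\infty}$ itself, not just its closed convex hull, to be weak-$\ast$ dense in $B$, which fails for generic Banach duals and relies on the concrete structure of $\mathcal{L}_{\infty}(\mathbb{R}_+)$ exploited above. Notice that semicontinuity works in our favour: $F$ is weak-$\ast$ lower semicontinuous (which is exactly what Fatou supplies), producing the inequality $F(u)\le\liminf F(u_n)$ that we want. This is also why Bauer's principle, which would require \emph{upper} semicontinuity of $F$ on a compact convex set, cannot be invoked directly, consistent with the authors' opening remark.
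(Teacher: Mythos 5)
Your proof is correct, and it takes a genuinely different route from the paper's. The paper stays entirely in the norm topology: it first shows that $x\mapsto\|L_K[u+xw]\|_1$ is convex (Lemma~\ref{Lemma1}), then iteratively modifies $u$ on the level sets $E_u(a,b)$ so that $|u|$ is pushed above $\tfrac12$ (Lemma~\ref{Lemma2}) and then above $\tfrac{2^n-1}{2^n}$ (Lemma~\ref{Lemma3}) without decreasing $\|L_K[\cdot]\|_1$, and finally replaces the result by its sign, bounding the error by $\|K\|_{\infty,1}\,\|u_n-v_n\|_\infty$ --- a step that explicitly invokes stability. You instead construct a single bang--bang sequence $u_n\in\mathcal{U}_\infty$ matching the local averages of $u$, prove weak-$*$ convergence, convert it into pointwise a.e.\ convergence of $L_K[u_n]$ via the a.e.\ integrability of the kernel sections from Assumption~\ref{MainA}, and conclude by Fatou, i.e.\ by weak-$*$ lower semicontinuity of $u\mapsto\|L_K[u]\|_1$. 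Your instinct that the delicate point is weak-$*$ density of $\mathcal{U}_\infty$ itself (not merely of its closed convex hull) is right, and your interval-splitting construction does establish it; this is the classical fact that the extreme points of the unit ball of $\mathcal{L}_\infty$ over a nonatomic measure space are weak-$*$ dense in the ball. What your approach buys is substantial: since Fatou yields $\|L_K[u]\|_1\le\liminf_n\|L_K[u_n]\|_1$ even when the left-hand side is $+\infty$, your argument never uses finiteness of $\|K\|_{\infty,1}$ and therefore disposes of the unstable case as well, for which the paper needs a separate and considerably longer argument (the truncated kernels $K_n$, Lemma~\ref{UnSlemma1} and Lemma~\ref{Lemma4}). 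What the paper's approach buys is a construction that, for each $u$ and $\epsilon$, produces an explicit $v\in\mathcal{U}_\infty$ with $\|L_K[v]\|_1\ge\|L_K[u]\|_1-\epsilon$ in finitely many elementary steps, without appeal to weak-$*$ topology. Two cosmetic remarks: metrizability/separability is not actually needed since your approximating sequence is explicit, and joint measurability of $(t,\tau)\mapsto K(t,\tau)$ (guaranteed by continuity of $K$) is tacitly used to make $L_K[u_n]$ a measurable function of $t$ before Fatou is applied.
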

To prove this, some useful lemmas are first introduced.\\

\begin{lemma}\label{Lemma1}
For any couple of functions $y,w \in {\mathcal L}_1$, the map $F:{\mathbb R} \ \rightarrow \ {\mathbb R}_+$ defined by
$$
F(x):=\int_0^{+\infty} \ | y(t)+xw(t) |dt, \ x\in{\mathbb R}
$$
is convex.
\end{lemma}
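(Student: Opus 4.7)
The plan is to prove convexity directly from the definition by exploiting the convexity of the absolute value and the linearity/monotonicity of the Lebesgue integral. This is a standard ``convexity-is-preserved-under-integration'' argument; no measure-theoretic subtlety arises because $y,w\in\mathcal{L}_1$ already guarantees that $y+xw\in\mathcal{L}_1$ for every $x\in\mathbb{R}$, so $F(x)$ is finite for all $x$ and the manipulations below make sense.

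First I would fix arbitrary $x_1,x_2\in\mathbb{R}$ and $\lambda\in[0,1]$, and write
$$
y(t)+(\lambda x_1+(1-\lambda)x_2)w(t)=\lambda\bigl(y(t)+x_1 w(t)\bigr)+(1-\lambda)\bigl(y(t)+x_2 w(t)\bigr).
$$
Applying the triangle inequality pointwise (equivalently, the convexity of $|\cdot|$ on $\mathbb{R}$), I obtain, for almost every $t\ge 0$,
$$
\bigl|\,y(t)+(\lambda x_1+(1-\lambda)x_2)w(t)\,\bigr|\ \le\ \lambda\,|y(t)+x_1 w(t)|+(1-\lambda)\,|y(t)+x_2 w(t)|.
$$

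Next I would integrate both sides over $[0,+\infty)$. The left-hand side is by definition $F(\lambda x_1+(1-\lambda)x_2)$, while linearity and monotonicity of the integral turn the right-hand side into $\lambda F(x_1)+(1-\lambda)F(x_2)$. This yields
$$
F(\lambda x_1+(1-\lambda)x_2)\ \le\ \lambda F(x_1)+(1-\lambda)F(x_2),
$$
which is precisely the convexity of $F$ on $\mathbb{R}$.

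There is essentially no obstacle here: the only point worth double-checking is that the integral defining $F(x)$ exists as a finite number for every $x\in\mathbb{R}$, and this is immediate from $|y+xw|\le |y|+|x|\,|w|$ together with $y,w\in\mathcal{L}_1$. Hence the statement follows almost by inspection, and the lemma will be ready to be combined with the remaining ingredients needed for Proposition \ref{Statement1}.
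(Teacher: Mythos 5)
Your proof is correct. It is worth noting that it is organized differently from the paper's own argument: you write the integrand at the intermediate point $\lambda x_1+(1-\lambda)x_2$ directly as a convex combination of the integrands at $x_1$ and $x_2$, apply the pointwise convexity of $|\cdot|$, and integrate --- the standard ``convexity is preserved under integration'' argument. The paper instead first performs the affine substitution $z(t)=y(t)+x_2w(t)$, $x=x_1-x_2$, reduces the claim to an inequality of the form $\int(|z+axw|-|az+axw|)\,dt\le(1-a)\int|z|\,dt$, and closes it with the reverse triangle inequality $|a|-|b|\le|a-b|$. The two routes rest on the same elementary fact (the triangle inequality for $|\cdot|$ together with linearity and monotonicity of the Lebesgue integral), but yours is the more direct and transparent formulation, and it generalizes verbatim to any convex integrand in place of $|\cdot|$; the paper's version is an equivalent but more roundabout computation. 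Your remark that $F(x)<+\infty$ for all $x$ because $y,w\in\mathcal{L}_1$ is a useful observation that the paper leaves implicit (and which matters later, since Lemma \ref{Lemma4} is precisely the extension to the case where finiteness fails).
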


\begin{proof} We need to prove that
$$
F(ax_1+(1-a)x_2) \le aF(x_1)+(1-a)F(x_2), \ \forall  a: \ 0\le a \le 1
$$
which is equivalent to showing that
\begin{eqnarray*}
&&\int_0^{+\infty} \ | y(t)+(ax_1+(1-a)x_2)w(t) |dt \le a\int_0^{+\infty} \ | y(t)+x_1w(t) |dt \\
&& \qquad \qquad \qquad +(1-a)\int_0^{+\infty} \ | y(t)+x_2w(t) |dt, \quad \forall  a: \ 0\le a \le 1.
\end{eqnarray*}
Letting 
$$z(t):=y(t)+x_2w(t) \ \ \text{and} \ \ x:=x_1-x_2,$$ 
in turn this is equivalent to
\begin{eqnarray*}
&& \int_0^{+\infty} \ | z(t)+axw(t) |dt \le a\int_0^{+\infty} \ | z(t)+xw(t) |dt \\
&& \qquad \qquad \qquad +(1-a)\int_0^{+\infty} \ | z(t) |dt, \quad \forall  a: \ 0\le a \le 1.
\end{eqnarray*}
and also to
\begin{eqnarray*}
&& \int_0^{+\infty} \ ( | z(t)+axw(t) | - | az(t)+axw(t) |) dt \\
&& \qquad \qquad \qquad  \le (1-a)\int_0^{+\infty} \ | z(t) |dt, \quad \forall  a: \ 0\le a \le 1.
\end{eqnarray*}
From $-|a-b| \le |a|-|b| \le |a-b|$, the lhs satisfies
$$
\int_0^{+\infty} ( | z(t)+axw(t) | - | az(t)+axw(t) |) dt \le
$$
$$
\le \int_0^{+\infty} \ | (z(t)+axw(t))-(az(t)+axw(t))  \ |dt=|1-a|\int_0^{+\infty} \ | z(t) |dt.
$$
Using $a\le 1$, this can be also rewritten as
$$
\int_0^{+\infty} ( | z(t)+axw(t) | - | az(t)+axw(t) |)dt \le (1-a)\int_0^{+\infty} \ | z(t) |dt
$$
and this concludes the proof.\\
\end{proof}

An immediate consequence of Lemma \ref{Lemma1} is that,
if
$$
v=u+xw \ \text{with} \  u,w \in {\mathcal L}_{\infty} \ \text{and} \  x\in{\mathbb R},
$$
then the function 
$$
G(x):=\|L_K[v]\|_1=\|L_K[u]+xL_K[w]\|_1 
$$
is convex. Therefore, using Bauer's maximum principle,
if $x \in [a,b]$ its maximum value is attained at either $x=a$ or $x=b$. 
For any $u\in{\mathcal B}_{\infty}$ we define
$$
E_u(a,b):=\left\{ t \in {\mathbb R}_+: \ a\le|u(t)|<b, \right\}, \ \ \forall a,b: \ 0 \le a < b \le 1.
$$

The next lemma then shows that the test inputs can be restricted to a proper subset of ${\mathcal B}_{\infty}$.\\

\begin{lemma}\label{Lemma2}
For any $u\in{\mathcal B}_{\infty}$, there exists $v\in{\mathcal B}_{\infty}$ such that $\|L_K[v]\|_1\ge \|L_K[u]\|_1$ and for any $t\in{\mathbb R}_+$ it holds that $|v(t)|\ge \frac{1}{2}$.
\end{lemma}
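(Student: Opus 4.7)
The plan is to modify $u$ only on the ``small-value'' set $A := E_u(0,1/2) = \{t \in \mathbb{R}_+ : |u(t)| < 1/2\}$ (on $A^c$ the required lower bound $|u| \ge 1/2$ already holds) and to extract the norm inequality from the convexity of $x \mapsto \|L_K[u+xw]\|_1$ established in Lemma~\ref{Lemma1}, combined with Bauer's maximum principle on a carefully chosen interval. The case where $A$ has measure zero is trivial ($v = u$ works), so assume $A$ has positive measure.

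First I would split $A$ (up to a null set) according to the sign of $u$, setting
$$A^+ := \{t \in A : u(t) \ge 0\}, \qquad A^- := \{t \in A : u(t) < 0\},$$
and take as perturbation the sign-adapted function $w := \chi_{A^+} - \chi_{A^-}$. Then $u + xw$ equals $u+x$ on $A^+$, $u-x$ on $A^-$, and $u$ on $A^c$. Using $u \in [0,1/2)$ on $A^+$ and $u \in (-1/2,0)$ on $A^-$, a direct inspection shows that the admissible range of $x$ — those $x$ for which $\|u+xw\|_\infty \le 1$ a.e. — is the asymmetric interval $x \in [-1, 1/2]$. By Lemma~\ref{Lemma1} the function $G(x) := \|L_K[u+xw]\|_1$ is convex on $\mathbb{R}$, so since $0 = \tfrac{1}{3}(-1) + \tfrac{2}{3}(1/2)$ lies in this interval, Bauer's principle yields
$$\max\bigl(G(-1),\, G(1/2)\bigr) \ \ge\ G(0) \ =\ \|L_K[u]\|_1.$$

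Then I would verify, by a routine case check on $A^+$, $A^-$ and $A^c$, that \emph{both} endpoint candidates satisfy $|v| \in [1/2,1]$ a.e.: at $x = 1/2$ the function $v = u + (1/2)w$ lies in $[1/2, 1)$ on $A^+$ and in $(-1, -1/2)$ on $A^-$; at $x = -1$ the function $v = u - w$ lies in $[-1, -1/2)$ on $A^+$ and in $(1/2, 1)$ on $A^-$; in both cases $v = u$ on $A^c$ where $|u| \ge 1/2$. Taking $v$ to be whichever of the two candidates attains the larger value of $G$ delivers $\|L_K[v]\|_1 \ge \|L_K[u]\|_1$ together with $|v| \ge 1/2$ a.e. The condition $\|v\|_\infty = 1$ is inherited from $v = u$ on $A^c$: since $|u| < 1/2$ on $A$, any set $\{|u| > 1 - \varepsilon\}$ with $\varepsilon < 1/2$ lies in $A^c$ and has positive measure, so $\|v\|_\infty \ge \|u\|_\infty = 1$, while $|v|\le 1$ a.e. by construction.

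The subtle point is the choice of perturbation direction and of admissible interval. Simpler candidates such as $w = \chi_A$ or $w = \mathrm{sign}(u)\,\chi_A$ also give a convex $G$ on some interval, but only one endpoint of that interval produces a function satisfying $|v| \ge 1/2$ on $A$, so Bauer's principle could select the ``wrong'' endpoint and one gains nothing. The asymmetric sign-split $w = \chi_{A^+} - \chi_{A^-}$ together with the interval $[-1, 1/2]$ is precisely engineered so that at \emph{every} extreme point of the admissible interval the resulting function is uniformly bounded away from zero by $1/2$ on $A$, making the convexity argument robust to which endpoint Bauer's principle picks.
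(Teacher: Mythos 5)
Your proof is correct, and while it runs on the same engine as the paper's --- the convexity of $x \mapsto \|L_K[u+xw]\|_1$ from Lemma~\ref{Lemma1} combined with an endpoint (Bauer-type) argument --- your construction is genuinely different and more economical. The paper needs two successive perturbations: first a multiplicative rescaling $u \mapsto x_0 u$ (with $x_0=\pm 2$) on the band $E_u(\tfrac14,\tfrac12)$, which empties that band, and only then a sign-oblivious additive shift by $\pm\tfrac34$ on the residual set $E_z(0,\tfrac14)$. The split at $\tfrac14$ is forced there because an additive shift by a constant $c$ only guarantees $|z+c|\ge |c|-|z|\ge\tfrac12$ when $|z|<\tfrac14$, while a multiplicative fix cannot lift values near zero. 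Your sign-adapted perturbation $w=\psi_{A^+}-\psi_{A^-}$ removes that obstruction and handles the whole set $\{|u|<\tfrac12\}$ in a single application of the convexity argument, at the price of an asymmetric admissible interval $[-1,\tfrac12]$, for which the barycentric identity $0=\tfrac13(-1)+\tfrac23\cdot\tfrac12$ still yields $G(0)\le\max\bigl(G(-1),G(\tfrac12)\bigr)$. Your endpoint verifications (both candidates satisfy $\tfrac12\le|v|\le 1$ on $A$, and $v=u$ on $A^c$ preserves $\|v\|_\infty=1$) all check out. Two minor remarks: the claim that $[-1,\tfrac12]$ \emph{is} the admissible range is not needed (and not quite accurate in general) --- containment in the admissible range suffices; and you only assert $|v|\ge\tfrac12$ a.e., whereas your construction in fact delivers the pointwise bound stated in the lemma, since on $A^c=\{|u|\ge\tfrac12\}$ you keep $v=u$ and on $A$ the case check is pointwise.
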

\begin{proof}
Let  $\psi_E(x)$ denote the indicator function of the set $E$ and consider
$$
w(t,x):=u(t)+(x-1)\psi_{E_{u}(\frac{1}{4},\frac{1}{2})}(t)u(t). 
$$
One can easily see that $w(t,x)=u(t)$ outside $E_{u}(\frac{1}{4},\frac{1}{2})$, while $w(t,x)=xu(t)$ in $E_{u}(\frac{1}{4},\frac{1}{2})$. Therefore $w(\cdot,x)\in{\mathcal B}_{\infty}$ at least in the interval $x \in [-2,2]$. So, by the convexity property discussed before, one has
$$
\|L_K[w(\cdot,x)]\|_1\ge \|L_K[u]\|_1
$$
for $x=x_0$, where either $x_0=-2$ or $x_0=2$. Now, define $z(t):=w(t,x_0)$ and note that 
$$
z \in {\mathcal B}_{\infty}, \ \|L_K[z]\|_1\ge \|L_K[u]\|_1, \ \ E_{z}(\frac{1}{4},\frac{1}{2})=\emptyset
$$
since all the values in the interval $[\frac{1}{4},\frac{1}{2})$ have been moved to the interval  $[\frac{1}{2},1)$. 
Define also
$$
y(t,x):=z(t)+x\psi_{E_{z}(0,\frac{1}{4})}(t)
$$
and note that such function belongs to ${\mathcal B}_{\infty}$ (at least) if $x \in [-\frac{3}{4},\frac{3}{4}]$.
Choosing again $x_0$ as the maximizing point, it comes that $v(t):=y(t,x_0)$ satisfies
$$
v \in {\mathcal B}_{\infty}, \ \|L_K[v]\|_1 \ge \|L_K[z]\|_1 \ge \|L_K[u]\|_1, \ |v(t)| \ge \frac{1}{2}, \ \forall t\ge 0
$$
and this concludes the proof. \\
\end{proof}

\begin{remark}\label{Remark2}
If the set $E_u(\frac{1}{4},\frac{1}{2})$ has null Lebesgue measure, it would be pointless to replace $u$ with $z$ since only what happens a.e. is relevant. Such situation would e.g. arise if $u=0$ a.e. or $u \in {\mathcal U}_{\infty}$ (which would lead to an empty set). 
In these two limit cases, one could directly assume either $v=1$ a.e. or $v=u$ everywhere.\\
\end{remark}

The next lemma allows us to conclude that the set of test inputs can be further restricted to ${\mathcal U}_{\infty}$.\\

\begin{lemma}\label{Lemma3} For any $u\in{\mathcal B}_{\infty}$ such that $|u(t)|\ge\frac{1}{2}$ for any $t \in {\mathbb R}_+$, and for any $\epsilon>0$, there exists $v\in{\mathcal U}_{\infty}$ such that $\|L_K[v]\|_1\ge \|L_K[u]\|_1-\epsilon$.
\end{lemma}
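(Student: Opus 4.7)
The plan is to recast this as a two-stage argument: first push the magnitude of $u$ arbitrarily close to $1$ by iterating the ``Bauer lift'' from Lemma \ref{Lemma2}, and then replace the resulting function by its sign, paying a price controlled by $\|K\|_{\infty,1}<+\infty$. Fix a small parameter $\eta\in(0,1/2]$ (to be chosen at the end). The iterative step is a refinement of Lemma \ref{Lemma2}: for every $u\in\mathcal{B}_{\infty}$ with $|u|\ge 1/2$ there exists $\tilde u\in\mathcal{B}_{\infty}$ with $|\tilde u|\ge 1-\eta$ everywhere and $\|L_K[\tilde u]\|_1\ge\|L_K[u]\|_1$. To prove it, I would partition $[1/2,1-\eta)$ into finitely many consecutive sub-bands $[a_k,b_k)$, $k=1,\dots,N$, satisfying the geometric condition $a_k/b_k\ge 1-\eta$ (e.g.\ $a_1=1/2$ and $b_k=a_k/(1-\eta)$, truncating the last band at $1-\eta$). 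At the $k$-th stage, with current iterate $u_{k-1}$, the multiplicative perturbation
$$
w_k(t,x) := u_{k-1}(t) + (x-1)\,\psi_{E_{u_{k-1}}(a_k,b_k)}(t)\,u_{k-1}(t),\qquad x\in[-1/b_k,\,1/b_k],
$$
keeps $w_k(\cdot,x)$ inside $\mathcal{B}_{\infty}$, and Lemma \ref{Lemma1} together with Bauer's principle delivers an endpoint $x_k^{\star}=\pm 1/b_k$ for which $\|L_K[w_k(\cdot,x_k^{\star})]\|_1\ge\|L_K[u_{k-1}]\|_1$. Setting $u_k:=w_k(\cdot,x_k^{\star})$ multiplies the values of $u_{k-1}$ inside the target level set by $x_k^{\star}$, producing magnitudes in $[a_k/b_k,1]\subseteq[1-\eta,1]$, while leaving the remaining values untouched. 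The geometric condition guarantees that already-lifted points land inside $[1-\eta,1]$ and hence fall outside every subsequent window $[a_j,b_j)$, so the iterations do not interfere and $\tilde u:=u_N$ satisfies the refined statement.

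Once $\tilde u$ is in hand, set $v(t):=\mathrm{sign}(\tilde u(t))$. Since $|\tilde u|\ge 1-\eta>0$, $v$ is measurable and $\pm 1$-valued, so $v\in\mathcal{U}_{\infty}$. From $\tilde u-v=(|\tilde u|-1)\mathrm{sign}(\tilde u)$ one gets $\|\tilde u-v\|_{\infty}\le\eta$, and the boundedness of $L_K$ (the stable hypothesis) yields $\|L_K[\tilde u-v]\|_1\le\|K\|_{\infty,1}\,\eta$. The reverse triangle inequality then gives
$$
\|L_K[v]\|_1\;\ge\;\|L_K[\tilde u]\|_1-\|K\|_{\infty,1}\,\eta\;\ge\;\|L_K[u]\|_1-\|K\|_{\infty,1}\,\eta,
$$
and choosing $\eta:=\min\{1/2,\,\epsilon/(1+\|K\|_{\infty,1})\}$ (or any positive value when $\|K\|_{\infty,1}=0$) yields the desired bound $\|L_K[v]\|_1\ge\|L_K[u]\|_1-\epsilon$.

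The principal obstacle lies in the iterative lift. The proof of Lemma \ref{Lemma2} is tailored to the single threshold $1/2$, and one has to verify that the same convexity-plus-Bauer mechanism can be chained along a sequence of level bands without the successive shifts undoing each other's work. This non-interference is precisely what the geometric condition $a_k/b_k\ge 1-\eta$ encodes; once it is built into the choice of partition, every other ingredient (Lemma \ref{Lemma1}, boundedness of $L_K$ in the stable case, and elementary measure-theoretic bookkeeping) is either already available or routine.
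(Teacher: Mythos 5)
Your proof is correct and follows essentially the same strategy as the paper's: lift the magnitude of $u$ toward $1$ by multiplicative perturbations supported on level bands, selecting the better endpoint via the convexity of Lemma \ref{Lemma1} and Bauer's principle, then replace the lifted function by its sign and absorb the resulting error through the boundedness of $L_K$ in the stable case. The only difference is bookkeeping: you use finitely many \emph{consecutive} bands covering $[1/2,1-\eta)$ for a fixed tolerance $\eta$ and then optimize over $\eta$, while the paper iterates an infinite sequence of bands $\left[1-2^{-n},\,1-\tfrac{1}{2^{n+1}-1}\right)$ and lets $n\to\infty$; your consecutive partition covers the whole range, whereas the paper's stated band endpoints leave small untouched gaps (e.g.\ $[2/3,3/4)$ after the first step), so your version of the lifting step is, if anything, the more carefully justified of the two.
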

\begin{proof} First, we introduce two key sequences of functions denoted by $w_n(t,x)$ 
and $v_n \in {\mathcal B}_{\infty}, \ n\in{\mathbb N}$. As clear in what follows, 
they are built recursively, one after the other. 
First, the two sequences satisfy
\begin{eqnarray*}
v_1(t)&:=&u(t) \\ 
w_{n+1}(t,x)&=&v_n(t)+(x-1)\psi_{E_{v_n}(\frac{2^n-1}{2^n},\frac{2(2^n-1)}{2^{n+1}-1})}(t)v_n(t), \ \forall n \ge 1.
\end{eqnarray*}
Just using this partial information on their nature, one can already 
see that $v_n(\cdot)\in{\mathcal B}_{\infty}$ implies $$w_{n+1}(\cdot,x) \in {\mathcal B}_{\infty}$$ at least in the interval 
$$
x \in 
\Big[\frac{1-2^{n+1}}{2(2^n-1)} ,\frac{2^{n+1}-1}{2(2^n-1)}\Big].
$$ 
So, again by the convexity property discussed before, 
it holds that
\begin{equation}\label{Constraint}
\|L_K[w_{n+1}(\cdot,x)]\|_1\ge \|L_K[v_n(\cdot)]\|_1 \ \ \text{for $x=x_n$}
\end{equation}
where 
$$
x_n=\frac{1-2^{n+1}}{2(2^n-1)} \ \ \text{or} \ \  x_n=\frac{2^{n+1}-1}{2(2^n-1)}.
$$ 
Now, we complete the definition of the sequences by
letting 
$$
v_{n+1}(t):=w_{n+1}(t,x_n)
$$ where $x_n$ is recursively defined
as that value satisfying \eqref{Constraint}.
One can easily see that $$E_{v_{n+1}}(0,\frac{2^{n+1}-1}{2^{n+1}})=\emptyset.$$
Furthermore, introducing 
$$
u_n(t):=\text{sign}[v_n(t)], \ \forall n\ge 1 \ \text{and} \ t\ge 0
$$
one can also see that 
$$
\|u_n-v_n\|_{\infty}\le \frac{1}{2^{n+1}}, \ u_n \in {\mathcal U}_{\infty}.
$$
Therefore, by the stability assumption (continuity of the operator $L_K$), it follows that
$$
\|L_{K}[u_n]-L_{K}[v_n]\|_1\le \frac{\|K\|_{\infty,1}}{2^{n+1}}=:a_n \ \Rightarrow \ \|L_{K}[u_n]\|_1\ge \|L_{K}[v_n]\|_1-a_n
$$
with $a_n\ge 0$ monotone non-decreasing and converging to zero. This, together with the monotone non-decreasing property of $\|L_{K}[v_n]\|_1$, implies $\|L_{K}[u_n]\|_1\ge\|L_{K}[u]\|_1-a_n$. We can now choose 
any $n$ such that $a_n \le \epsilon$, obtaining that $v:=u_n$ satisfies the statement present in this lemma.\\
\end{proof}

\begin{remark} \label{Remark3} The same caveats contained in Remark \ref{Remark2} hold here. 
Specifically, if $u\in{\mathcal U}_{\infty}$ one could just assume $v=u$.\\ 
\end{remark}

Now we are ready to prove Statement \ref{Statement1}. Let $u_n \in {\mathcal B}_{\infty}$ be any sequence such that $\|L_{K}[u_n]\|_1$ tends to $\|K\|_{\infty,1}$ and define
$\epsilon_n:=\frac{1}{n}$. We also build $v_n\in {\mathcal B}_{\infty}$ in accordance with Lemma \ref{Lemma2} and then, starting from $v_n$, the sequence $w_n \in {\mathcal U}_{\infty}$ 
in accordance with Lemma \ref{Lemma3}, which makes $$\|L_K[w_n]\|_1\ge \|L_{K}[u_n]\|_1-\frac{1}{n}.$$ It is now immediate to see that $\|L_K[w_n]\|_1$ also tends  to $\|K\|_{\infty,1}$. This completes the proof. 
\medskip

%


\subsection{The unstable case}

In presence of instability, we can now consider, in place of ${\mathcal L}_1$, the space ${\mathcal L}$ containing all the Lebesgue-measurable functions from ${\mathbb R}_+ \setminus \mathcal{I}$ to ${\mathbb R}$, i.e. 
the space of Lebesgue measurable and real-valued  functions 
defined a.e. over ${\mathbb R}_+$.
Abusing notation,
we still use 
$$
\|y\|_1:=\int_0^{+\infty} \ |y(t)|dt, \ \forall y\in{\mathcal L}
$$
noticing that $\|\cdot \|_1$ is no more a true norm since one could have $\|y\|_1=+\infty$.\\
With these facts in mind, since instability is equivalent to 
$$
\sup_{u\in{\mathcal B}_{\infty}} \ \|L_K[u]\|_1=+\infty,
$$
we now show that 
\begin{equation}\label{FinalStep}
\sup_{u\in{\mathcal B}_{\infty}} \ \|L_K[u]\|_1=+\infty \ \Rightarrow \ \sup_{u\in{\mathcal U}_{\infty}} \ \|L_K[u]\|_1=+\infty.
\end{equation}



For this purpose, consider the restriction $K_n$ of $K$ to the square ${\mathcal Q}_n=[0,n] \times [0,n]$.
More precisely, $K_n$ coincides with $K$ in ${\mathcal Q}_n$ and assume null values elsewhere.
Since $K$ is continuous (Mercer), $K_n$ is locally bounded over ${\mathcal Q}_n$ and hence, absolutely integrable.
This implies that $K_n$ is a stable kernel. 
Define
\begin{equation}\label{Seqan}
a_n:=\|K_n\|_{\infty,1}=\sup_{u \in {\mathcal B}_{\infty}} \ \|L_{K_n}[u]\|_1=\sup_{u \in {\mathcal U}_{\infty}} \ \|L_{K_n}[u]\|_1
\end{equation}
where the last equality derives from the result obtained in the first part of the proof
applied to the stable kernel $K_n$. We now show that the sequence $a_n$ is divergent reporting the first
of two instrumental lemmas useful to obtain the desired result.\\

\begin{lemma}\label{UnSlemma1}
Let $a_n$ be defined in \eqref{Seqan}.
Then, one has
\begin{equation}\label{LimSeqan}
a:=\lim_{n\rightarrow+\infty} \ a_n = +\infty.
\end{equation}
\end{lemma}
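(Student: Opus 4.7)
The strategy is a proof by contradiction based on a truncation-plus-Fatou argument. The first observation is that the sequence $a_n$ is monotone non-decreasing: any $u\in\mathcal{B}_{\infty}$ realizing a value close to $a_n$ can be used unchanged for $K_{n+1}$ (which agrees with $K_n$ on $\mathcal{Q}_n$ and with $K$ on $\mathcal{Q}_{n+1}\setminus\mathcal{Q}_n$), and enlarging the outer domain of integration from $[0,n]$ to $[0,n+1]$ can only increase $\|L_{K_{n+1}}[u]\|_1$. Hence $a:=\lim_n a_n=\sup_n a_n$ exists in $[0,+\infty]$. I then assume, aiming at a contradiction, that $a<+\infty$, and show that under this hypothesis $L_K$ is a bounded map from $\mathcal{L}_{\infty}$ to $\mathcal{L}_1$, contradicting the instability assumption.

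Fix any $u\in\mathcal{B}_{\infty}$ and set $f_n(t):=L_{K_n}[u](t)$. Since $K_n$ vanishes outside $\mathcal{Q}_n=[0,n]\times[0,n]$, one has $f_n(t)=\int_0^n K(t,\tau)u(\tau)\,d\tau$ for $t\in[0,n]$ and $f_n(t)=0$ for $t>n$. Assumption \ref{MainA} guarantees that for $t\notin\mathcal{I}$, that is, for almost every $t\in\mathbb{R}_+$, the kernel section $K(t,\cdot)$ belongs to $\mathcal{L}_1$. For any such $t$ the integrand is dominated by the integrable function $\|u\|_{\infty}|K(t,\cdot)|$, so the dominated convergence theorem yields $f_n(t)\to L_K[u](t)$. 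Applying Fatou's lemma and using $\|f_n\|_1\le a_n$ (which is a direct consequence of \eqref{Seqan}) gives
\begin{equation*}
\|L_K[u]\|_1=\int_0^{+\infty}|L_K[u](t)|\,dt\le\liminf_{n\to\infty}\|f_n\|_1\le\liminf_{n\to\infty}a_n=a<+\infty.
\end{equation*}
Taking the supremum over $u\in\mathcal{B}_{\infty}$ yields $\|K\|_{\infty,1}\le a<+\infty$, which contradicts the standing instability hypothesis $\sup_{u\in\mathcal{B}_{\infty}}\|L_K[u]\|_1=+\infty$. Therefore $a=+\infty$.

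The only genuinely delicate step is the almost-everywhere convergence $f_n(t)\to L_K[u](t)$; all the rest is bookkeeping. This convergence is precisely what Assumption \ref{MainA} is designed to support, since without the a.e.\ integrability of the kernel sections there would be no natural integrable majorant on which to base dominated convergence. Once this pointwise limit is in hand, Fatou's lemma converts the uniform bound on the truncated operator norms into a uniform bound on $L_K$ itself, which is incompatible with instability.
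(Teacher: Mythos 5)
Your proof is correct and follows essentially the same strategy as the paper: monotonicity of $a_n$, a contradiction hypothesis $a<+\infty$, pointwise a.e.\ convergence of the truncated outputs (enabled by the a.e.\ integrability of the kernel sections in Assumption \ref{MainA}), and Fatou's lemma to transfer the uniform bound to $L_K$. Your version is in fact slightly cleaner: by working with the square truncations $K_n$ throughout you get $\|f_n\|_1\le a_n$ for free and avoid the paper's auxiliary rectangular kernels $K_n(p)$ and $K_n(+\infty)$, and by bounding $\|L_K[u]\|_1\le a$ for \emph{every} $u\in\mathcal{B}_{\infty}$ you contradict instability directly rather than through a single witness $v$ with $\|L_K[v]\|_1\ge a+1$. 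One small imprecision: in the monotonicity step, using a near-optimal $u$ ``unchanged'' for $K_{n+1}$ does not immediately give $\|L_{K_{n+1}}[u]\|_1\ge\|L_{K_n}[u]\|_1$, because $K_{n+1}$ also enlarges the \emph{inner} ($\tau$) domain and the added contribution $\int_n^{n+1}K(t,\tau)u(\tau)\,d\tau$ could partially cancel the rest; you should first replace $u$ by $u\,\psi_{[0,n]}$, which leaves $\|L_{K_n}[u]\|_1$ unchanged and stays in the unit ball, exactly as the paper does by restricting to inputs supported on $[0,n]$.
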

\begin{proof}
The sequence 
$a_n$ is monotonically non decreasing. To see this, one can e.g.
apply inputs with support over $[0,n]$ to kernel operators induced by $K_m$, with $m \ge n$, obtaining outputs with $1$-norms 
not less than those defined by the same inputs applied to kernel operators associated with $K_n$.
It follows that there exists 
$$
a=\lim_{n\rightarrow+\infty} \ a_n \ge a_n, \ \forall n.
$$
For the sake of contradiction, assume $a<+\infty$. Since $K$ is unstable, there exists $v\in{\mathcal B}_{\infty}$ such that $\|L_K[v]\|_1\ge a+1$ (possibly $\|L_K[v]\|_1=+\infty${\color{blue})}. Now consider the restrictions $v_n$ of $v$ over $[0,n]$ with null values outside this interval. 
Let us introduce the sets 
$$
{\mathcal R}_n( p ):=\left\{ (x,y): 0 \le x \le n+p, \ 0 \le y \le n \right\}
$$
and
$$
{\mathcal R}_n:=\{ \ (x,y): 0 \le x <+\infty, \ 0 \le y \le n \ \}
$$
with $K_n(p)$ and $K_n(+\infty)$ to denote the corresponding ``kernels''
(we call these objects kernels improperly since they are not even symmetric).
One has
$$
L_{K_n(p)}[v_n]=L_{K_{n+p}}[v_n] \ \Rightarrow \ \|L_{K_n(p)}[v_n]\|_1 \le a_{n+p} \le a, \ \forall p
$$
where the equality holds since $v_n(t)=0$ for $t \in (n,n+p]$ while the inequality follows from the fact that all the $a_n$ 
are upper bounded by $a$. Hence, 
$$
\|L_{K_n(p)}[v_n]\|_1=\int_0^{n+p} \ |y_n(t)| dt \le a, \ \forall p \ \Rightarrow \ \int_0^{+\infty} \ |y_n(t)| dt \le a
$$
$$
\Rightarrow \ \|L_{K_n(+\infty)}[v]\|_1=\|L_{K_n(+\infty)}[v_n]\|_1\le a
$$
(note that $L_{K_n(+\infty)}[v]=L_{K_n(+\infty)}[v_n]$ since the support of $K_n$ does not extend beyond $n$). 
Now, for any $t$ such that the kernel section is absolutely integrable we define
$$
y(t):=L_K[v](t)=\int_0^{+\infty} \ K(t,\tau)v(\tau)d\tau,
$$
$$
y_n(t):=L_{K_n(+\infty)}[v](t)=\int_0^n \ K(t,\tau)v(\tau)d\tau.
$$
Then, the term
$$
|y(t)-y_n(t)|\le\int_n^{+\infty} \ |K(t,\tau)|d\tau
$$
is infinitesimal as $n$ grows to infinity, so that $[L_{K_n(+\infty)}[v](t)$ pointwise converges to $y(t)$ a.e. 
(in fact the kernel sections are non absolutely summable only if $t$ belongs to ${\mathcal I}$ 
whose Lebesgue measure has been assumed null). Summarizing, one has
$$
\|y_k(t)\|_1\le a, \ \|y(t)\|_1\ge a+1, \ y_k(t) \rightarrow y(t) \ \mbox{(pointwise a.e.)}.
$$
But if $y_k(t)$ converges pointwise a.e. to $y(t)$, a fortiori one has 
$$
y(t)=\lim_{k\rightarrow+\infty} \ \inf_{m\ge k} \ y_m(t)=\lim_{k\rightarrow+\infty} \ y_k(t) \ \mbox{a.e.}
$$
and, using the Fatou's Lemma \cite{Rudin}, one obtains
$$
\|y\|_1=\int_0^{+\infty} \ |y(t)|dt \le \lim_{k\rightarrow+\infty} \ \inf_{m\ge k} \ \int_0^{+\infty} \ |y_m(t)|dt =
$$
$$
= \lim_{k\rightarrow+\infty} \ \inf_{m\ge k} \ \|y_m(t)\|_1 \le a.
$$
This shows that $\|y\|_1\le a$ and $\|y\|_1\ge a+1$, leading to the desired contradiction if $a<+\infty$.
Hence, one must have $a=+\infty$ and this completes the proof.\\
\end{proof}

We now introduce the second lemma useful for our purposes. It provides an 
extension of the  convexity arguments contained in Lemma \ref{Lemma1} 
to handle functions not necessarily in ${\mathcal L}_1$.\\

\begin{lemma}\label{Lemma4} Consider the (possibly
not absolutely integrable) functions $y(t),w(t)$
 and define
$$
F(x):=\int_0^{+\infty} \ | y(t)+xw(t) |dt, \ x\in [-x_1, x_2], \ x_1,x_2>0.
$$
Then, at least one of $x_1,x_2$ satisfies the inequality $F(x_i)\ge F(0)$.
\end{lemma}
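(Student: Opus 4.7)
The plan is to reduce to Lemma \ref{Lemma1} whenever possible, and to handle the non-integrable case by an elementary triangle-inequality argument that recovers integrability of $y$ from the finiteness of $F$ at the endpoints.

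The first step is to split into cases according to the finiteness of the three values $F(-x_1)$, $F(0)$, $F(x_2)$. The key algebraic identity I would exploit is
$$
y(t)=\frac{x_2}{x_1+x_2}\bigl(y(t)-x_1 w(t)\bigr)+\frac{x_1}{x_1+x_2}\bigl(y(t)+x_2 w(t)\bigr),
$$
which by the triangle inequality gives
$$
|y(t)|\le \frac{x_2}{x_1+x_2}\,|y(t)-x_1 w(t)|+\frac{x_1}{x_1+x_2}\,|y(t)+x_2 w(t)|.
$$
Integrating over $[0,+\infty)$ yields
$$
F(0)\le \frac{x_2}{x_1+x_2}F(-x_1)+\frac{x_1}{x_1+x_2}F(x_2).
$$

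Now I would exploit this inequality as follows. If $F(0)=+\infty$, then at least one of $F(-x_1),F(x_2)$ must also be $+\infty$, since otherwise the right-hand side would be finite; in that case the statement holds trivially with $F(x_i)=+\infty\ge F(0)$. If instead at least one of $F(-x_1),F(x_2)$ is $+\infty$ the statement is again immediate. The only remaining case is when all three quantities are finite. Here $y-x_1 w$ and $y+x_2 w$ both lie in $\mathcal{L}_1$; taking their appropriate linear combination gives that $y$ and $w$ separately belong to $\mathcal{L}_1$. This is the crucial point because it puts us back in the hypotheses of Lemma \ref{Lemma1}, from which the convexity of $F$ on $[-x_1,x_2]$ follows.

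Once convexity is available, I would conclude by writing $0$ as a convex combination of $-x_1$ and $x_2$, namely $0=\alpha(-x_1)+(1-\alpha)x_2$ with $\alpha=x_2/(x_1+x_2)$, so that
$$
F(0)\le \alpha F(-x_1)+(1-\alpha)F(x_2)\le \max\bigl\{F(-x_1),F(x_2)\bigr\}.
$$
Hence at least one of $F(-x_1),F(x_2)$ is $\ge F(0)$, which is the desired conclusion. The only subtlety, and the step that requires the small preliminary case analysis, is the passage from the assumed integrability of $y\pm x_i w$ to the integrability of $y$ and $w$ individually; without this reduction Lemma \ref{Lemma1} could not be invoked. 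Apart from this, the proof is a direct and short consequence of convexity.
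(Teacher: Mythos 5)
Your proof is correct, but it follows a genuinely different and in fact more economical route than the paper's. The paper proves this lemma by a five-way case analysis on whether $y$ and $w$ belong individually to ${\mathcal L}_1$, showing that outside the situation already covered by Lemma \ref{Lemma1} the function $F$ equals $+\infty$ everywhere except possibly at a single point $x_0$, whence at least one endpoint value is infinite. You instead integrate the pointwise identity $y=\alpha(y-x_1w)+(1-\alpha)(y+x_2w)$ with $\alpha=x_2/(x_1+x_2)$ to get
$$
F(0)\le \alpha F(-x_1)+(1-\alpha)F(x_2),
$$
an inequality between nonnegative extended reals that is valid with no integrability hypotheses whatsoever. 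What this buys is uniformity: the single display already implies $F(0)\le\max\{F(-x_1),F(x_2)\}$ in \emph{every} case (if both endpoint values were strictly below $F(0)$, the strict convex combination would be too), so your subsequent case split --- and in particular the reduction to Lemma \ref{Lemma1} via the observation that finiteness of $F(-x_1)$ and $F(x_2)$ forces $y,w\in{\mathcal L}_1$ --- is logically correct but redundant. You could delete everything after the first displayed inequality and the lemma would still follow; indeed your argument also reproves the relevant consequence of Lemma \ref{Lemma1} as a byproduct. One small presentational point: the statement's ``$F(x_i)$'' refers to the endpoints $-x_1$ and $x_2$ of the interval, which you have interpreted correctly.
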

\begin{proof}
The possible situations are
\begin{itemize}
\item $w(t) \in {\mathcal L}_1, \ y(t) \in {\mathcal L}_1$. In this case the result follows from Lemma 
\ref{Lemma1} and basic properties of convex functions. 
\item $w(t) \in {\mathcal L}_1, \ y(t) \notin {\mathcal L}_1$. In this case, one has 
\begin{eqnarray*}
&& F(x):=\int_0^{+\infty} \ | y(t)+xw(t) |dt \\
&&  \qquad \ge \int_0^{+\infty} \ | y(t) |dt -|x|\int_0^{+\infty} \ | w(t) |dt=+\infty, \ \forall x \in {\mathbb R}
\end{eqnarray*}
and the result immediately follows. 
\item $w(t) \notin {\mathcal L}_1, \ y(t) \in {\mathcal L}_1$. In this case, 
\begin{eqnarray*}
&& F(x):=\int_0^{+\infty} \ | y(t)+xw(t) |dt \\
&&  \qquad  \ge |x|\int_0^{+\infty} \ | w(t) |dt-\int_0^{+\infty} \ | y(t) |dt=+\infty, \ \forall x \in {\mathbb R} / \{ \ 0 \}
\end{eqnarray*}
so that both $F(x_1)$ and $F(x_2)$ are larger than $F(0)$. 
\item $w(t) \notin {\mathcal L}_1, \ y(t) \notin {\mathcal L}_1$ and, in addition, there exists $x_0 \in {\mathbb R}_+$ 
such that $y(t)+x_0w(t) \in {\mathcal L}_1$. Now
\begin{eqnarray*}
&& F(x):=\int_0^{+\infty} \ | y(t)+xw(t) |dt  \\
&&  \qquad  
=\int_0^{+\infty} \ | [y(t)+x_0w(t)]  + (x-x_0)w(t) |dt
\end{eqnarray*}
so that we can consider the second case just replacing $y(t)$ with $y(t)+x_0w(t)$ and $xw(t)$ with $(x-x_0)w(t)$.
So, $F(x)$ is always equal to $+\infty$ except when $x=x_0$. So, $F(x_1)\ge F(0)$ and/or $F(x_2)\ge F(0)$. 
\item $w(t) \notin {\mathcal L}_1, \ y(t) \notin {\mathcal L}_1$ and, in addition, there does not exist $x_0 \in {\mathbb R}_+$ such that 
$y(t)+x_0w(t) \in {\mathcal L}_1$. In this case, $F(x)$ is always equal to $+\infty$.
\end{itemize}
Summarizing, we have seen that, except in the first case already treated in Lemma \ref{Lemma1}, $F(x)$ is always equal to $+\infty$ except at most when $x=x_0$. So, when $x=x_1$ and/or $x=x_2$ it assumes an infinite value. This completes the proof.\\  
\end{proof}

We are now in a position to conclude the proof.
Exploiting Lemma \ref{UnSlemma1}, in particular
\eqref{Seqan} and \eqref{LimSeqan}, we can 
find a sequence $w_n\in{\mathcal B}_{\infty}$ with bounded support over $[0,n]$
satisfying
$$
\|y_n\|_1:=\|L_{K}[w_n]\|_1 \ge a_n-\epsilon, \ |w_n(t)|=1, \ \forall t: \ 0\le t \le n.
$$
In fact, given any $\epsilon>0$, the 1-norm just of $y_n(t)$ restricted to $[0,n]$ would not be smaller than $a_n-\epsilon$ in view of the definition of supremum applied to the kernel operator associated with $K_n$. 
Thus, one has 
$$
\lim_{n \rightarrow +\infty}  \|L_{K}[w_n]\|_1 = +\infty. 
$$
Now, we extend $w_n$ for $t>n$ as follows 
$$
v_n(x):=w_n+x\psi_{(n,+\infty)} 
$$
and note that $v_n(x) \in {\mathcal B}_{\infty}$ for any $x \in [-1,1]$ and $v_n(x) \in {\mathcal U}_{\infty}$ for $x=\pm 1$. 
Now, we can exploit Lemma \ref{Lemma4}, 
interpreting $v_n(x)$ 
as $y(t)+xw(t)$. One obtains $$\|L_{K}[v_n](x_n)\|_1\ge \|L_K[w_n]\|_1 \ \ \text{if either} \ \ x_n=+1 \ \ \text{or} \ \ x_n=-1.$$
Then, choosing $$s_n:=v_n(x_n),$$ with $x_n$ depending on $n$ and chosen in order to satisfy the above inequality for any $n$, one has $$\|L_{K}[s_n]\|_1\ge a_n-\epsilon.$$
This permits to conclude that 
$$
\lim_{n \rightarrow +\infty}  \|L_{K}[s_n]\|_1 = +\infty \ \ \text{with all the} \ \ s_n\in{\mathcal U}_{\infty},
$$
hence completing the proof (note that $\|L_{K}[s_n]\|_1$ could be already equal to $+\infty$, 
making not even necessary the building of the entire sequence).

\section{The discrete-time case}\label{Sec3}

In discrete-time, the kernel $K$ can be interpreted as an infinite-dimensional matrix. 
Using notation of ordinary algebra also to handle objects of infinite dimension, the kernel operator 
$L_K[u]$ becomes $Ku$ where $u$ is an infinite-dimensional column vector. 
It is then immediate to see that all the arguments developed
in the previous section to prove Theorem \ref{MainTh} still hold.
One has just to consider the corresponding discrete-time versions
of $\mathcal{L}_{1},\mathcal{L}_{\infty},\mathcal{B}_{\infty}$ and ${\mathcal U}_{\infty}$. 
Furthermore, when discussing
the unstable case, the space of functions ${\mathcal L}$ with the Lebesgue measure 
is replaced by the measure space containing all the sequences
over $\mathbb{N}$ equipped with the counting measure. 
Then, all the arguments contained in the proof 
hold true with integrals that become sums. 
One concludes that 
even in discrete-time the sup of the kernel operator norm computed 
over ${\mathcal B}_{\infty}$ 
coincides with that over ${\mathcal U}_{\infty}$, i.e.
$$
\sup_{u \in  {\mathcal B}_{\infty}} \|Ku\|_1 = \sup_{u \in  {\mathcal U}_{\infty}} \|Ku\|_1.
$$

\section{Conclusions}

The result obtained in this paper 
gives additional insights on the relationship
between stability of a single linear dynamic systems
and stability of an ensemble of systems embedded in a 
positive definite kernel. 
Stability of a RKHS containing time-invariant linear systems can now be assessed using the same 
functions needed to investigate BIBO stability of any single element in the RKHS. 
We envision that the availability of this new stability
test (over a smaller subset of test functions) could  
facilitate the development of new theory regarding 
the use of RKHSs in system identification and control theory.


\end{document}